\documentclass[english]{article}               
\usepackage{geometry}
\usepackage{amsmath,amsthm} 
\newtheorem{theorem}{Theorem}[section]
\newtheorem{proposition}[theorem]{Proposition}
\newtheorem{definition}[theorem]{Definition}

\newtheorem{lemma}[theorem]{Lemma}
\usepackage[T1]{fontenc}
\usepackage[latin9]{inputenc}
\usepackage{amsmath}
\usepackage{babel}
\usepackage{amsfonts,amssymb,epsfig,multicol}
\usepackage{rotating}
\usepackage{float}
\usepackage{array}

\newcommand\undermat[2]{%
  \makebox[0pt][l]{$\smash{\underbrace{\phantom{%
    \begin{matrix}#2\end{matrix}}}_{\text{$#1$}}}$}#2}


\begin{document}
\date{}
\title{Some Bounds on Binary LCD Codes}
\author{Lucky Galvez \thanks{ Department of Mathematics,
Sogang University,
Seoul 04107, South Korea.
{Email: \tt  legalvez97@gmail.com}}
\and Jon-Lark Kim\thanks{ Department of Mathematics,
Sogang University,
Seoul 04107, South Korea.
{Email: \tt jlkim@sogang.ac.kr}}
\and Nari Lee\thanks{ Department of Mathematics,
Sogang University,
Seoul 04107, South Korea.
{Email: \tt narilee3@gmail.com}}
\and Young Gun Roe\thanks{ Department of Mathematics,
Sogang University,
Seoul 04107, South Korea.
{Email: \tt ygroe@naver.com}}
\and Byung-Sun Won\thanks{ Department of Mathematics,
Sogang University,
Seoul 04107, South Korea.
{Email: \tt byungsun08@gmail.com}}
}

\maketitle
\begin{abstract}

A linear code with a complementary dual (or LCD code) is defined to be a linear code $C$ whose dual code $C^{\perp}$ satisfies $C \cap C^{\perp}$= $\left\{ \mathbf{0}\right\} $. Let $LCD{[}n,k{]}$ denote  the maximum of possible values of $d$ among $[n,k,d]$ binary LCD codes. We give exact values of $LCD{[}n,k{]}$ for $1 \le k \le n \le 12$.
 We also show that $LCD[n,n-i]=2$ for any $i\geq2$ and $n\geq2^{i}$. Furthermore, we show that $LCD[n,k]\leq LCD[n,k-1]$ for $k$ odd and  $LCD[n,k]\leq LCD[n,k-2]$ for $k$ even.
\end{abstract}\maketitle
\smallskip
\noindent \textbf{Keywords :} binary LCD codes \and bounds\and linear code 


\section{Introduction}

A linear code with complementary dual (or LCD code) was   first   introduced by Massey~\cite{M1}  as a reversible code in 1964.  Afterwards, LCD codes were extensively studied in literature and widely applied in data storage, communications systems, consumer electronics, and cryptography.

In \cite{M2} Massey showed that there exist asymptotically good LCD codes. Esmaeili and Yari~\cite{E1} identified a few classes of LCD quasi-cyclic codes. For bounds of LCD codes, Tzeng and Hartmann~\cite{T1} proved that the minimum distance of a class of reversible codes is greater than that given by the BCH bound. Sendrier~\cite{S1} showed that LCD codes meet the asymptotic Gilbert-Varshamov bound using the hull dimension spectra of linear codes. Recently, Dougherty et al.~\cite{key-1} gave a linear programming bound on the largest size of an LCD$[n,d]$. Constructions of LCD codes were studied by Mutto and Lal~\cite{M3}. Yang and Massey~\cite{Y1} gave a necessary and sufficient condition for a cyclic code to have a complementary dual. It is also shown by Kandasamy et al.~\cite{V1} that maximum rank distance codes generated by the trace-orthogonal-generator matrices are LCD codes.  In 2014, Calet and Guilley~\cite{C1} introduced several constructions of LCD codes and investigated an application of LCD codes against side-channel attacks(SCA). Shortly after,  Mesnager et al.~\cite{M4} provided a construction of algebraic geometry LCD codes which could be good candidates to be resistant against SCA. Recently Ding et al.~\cite{D1} constructed several families of reversible cyclic codes over finite fields.

\medskip

The purpose of this paper is to study exact values of $LCD[n,k]$ (see~\cite{key-1}) which is the maximum of possible values of $d$ among $[n,k,d]$ binary LCD codes.
 We give exact values of  $LCD[n,2]$ in Section  \ref{sec:3}.  In Section~\ref{sec:4}, we investigate $LCD[n,k]$ and show that $LCD[n,n-i]=2$ for any $i\geq2$ and  $n\geq2^{i}$. We prove that  $LCD[n,k]\leq LCD[n,k-1]$ for $k$ odd and that $LCD[n,k]\leq LCD[n,k-2]$ for  $k$ even  using the notion of principal submatrices. In Section \ref{sec:5}, we  give exact values for $LCK[n,d]$.  We have included tables for $LCD\left[n,k\right]$  for $1 \le k \le n \le 12$ and $LCK\left[n,d\right]$ for $1 \le d \le n \le 12$.


\section{$LCD [n,2]$}
\label{sec:3}

Let $GF(q)$ be the finite field with $q$ elements.
\emph{An $[n,k]$ code} $C$ over $GF(q)$ is a $k$-dimensional subspace of $GF(q)^{n}$. If $C$ is a linear code, we let

\[C^{\perp}=\left\{ \mathbf{u\in}V\mid\mathbf{u\cdot\mathbf{w}}=0\: {\mbox{for  all }} \mathbf{w\in} C\right\}. \]

We call $C^{\perp}$ the \emph{dual} or \emph{orthogonal} code of $C$.

\begin{definition}

A {\em linear code with complementary dual} (LCD code) is a linear code $C$ satisfying $C \cap C^{\bot}=\left\{ \mathbf{0}\right\}. $

\end{definition}

We note that if $C$ is an LCD code, then so is $C^{\perp}$ because ($C^{\perp}$)$^{\perp}$= C.  The following proposition will be frequently used in the later sections.

\begin{proposition}$(\cite{M2})$
\label{prop:1}

Let $G$ be a generator matrix for a code over $GF(q)$. Then $\det(GG^{T})\neq0$ if and only if $G$ generates an $\mathrm{LCD}$ code.

\end{proposition}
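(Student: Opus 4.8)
The plan is to work directly with the bilinear form on $GF(q)^n$ and connect the non-degeneracy of its restriction to $C$ with the invertibility of the Gram matrix $GG^T$. Write $C = \mathrm{rowspace}(G)$, where $G$ is $k \times n$ of rank $k$. The key observation is that a vector $\mathbf{x} = \mathbf{a}G \in C$ (for $\mathbf{a} \in GF(q)^k$) lies in $C^\perp$ if and only if $\mathbf{x} \cdot (\mathbf{b}G) = 0$ for every $\mathbf{b} \in GF(q)^k$, i.e. $\mathbf{a} G G^T \mathbf{b}^T = 0$ for all $\mathbf{b}$, i.e. $\mathbf{a}(GG^T) = \mathbf{0}$. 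Hence $C \cap C^\perp = \{ \mathbf{a}G : \mathbf{a} \in \mathrm{LeftNull}(GG^T)\}$, and since $G$ has full row rank the map $\mathbf{a} \mapsto \mathbf{a}G$ is injective, so $C \cap C^\perp = \{\mathbf{0}\}$ exactly when $GG^T$ has trivial left null space, i.e. when $\det(GG^T) \neq 0$.

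First I would set up notation and state the full-row-rank assumption on $G$ (any generator matrix has this). Then I would prove the ``only if'' direction of the displayed chain above: if $\mathbf{a}(GG^T) \neq \mathbf{0}$ is impossible to avoid—rather, if there is a nonzero $\mathbf{a}$ with $\mathbf{a}(GG^T) = \mathbf{0}$, then $\mathbf{a}G$ is a nonzero codeword (using $\mathrm{rank}(G) = k$) orthogonal to every row of $G$ and hence to all of $C$, so $\mathbf{a}G \in C \cap C^\perp$, contradicting the LCD property; thus $\det(GG^T) \neq 0$. Conversely, if $\det(GG^T) \neq 0$ and $\mathbf{x} \in C \cap C^\perp$, write $\mathbf{x} = \mathbf{a}G$; orthogonality of $\mathbf{x}$ to each row of $G$ gives $\mathbf{a}GG^T = \mathbf{0}$, forcing $\mathbf{a} = \mathbf{0}$ and $\mathbf{x} = \mathbf{0}$, so $C$ is LCD.

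The only subtlety—and the one point worth stating carefully rather than the ``main obstacle''—is the direction over finite fields: the form $\langle \mathbf{u},\mathbf{w}\rangle = \mathbf{u}\cdot\mathbf{w}$ can be degenerate on a subspace without the subspace being trivial (e.g. self-orthogonal codes), so one must not invoke any real-inner-product intuition; everything has to be phrased purely in terms of the null space of $GG^T$. Since this is already available as a classical result of Massey, I would keep the write-up to the short two-directional argument above. If one wants the dimension count made explicit, note $\dim(C \cap C^\perp) = k - \mathrm{rank}(GG^T)$, which simultaneously recovers the proposition and will be convenient later.
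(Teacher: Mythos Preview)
Your argument is correct and is the standard proof of Massey's criterion. Note, however, that the paper does not actually supply a proof of this proposition: it is stated as a cited result from \cite{M2} and used as a black box throughout. So there is nothing in the paper to compare your approach against; you have simply filled in the omitted classical argument, and done so cleanly (the identification $\dim(C\cap C^{\perp}) = k - \mathrm{rank}(GG^{T})$ via the injection $\mathbf{a}\mapsto\mathbf{a}G$ is exactly the right way to phrase it over arbitrary $GF(q)$).
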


Throughout the rest of the paper, we consider only binary codes.
Dougherty et al.~\cite{key-1} introduced $LCD[n,k]$ which denotes the maximum of possible values of $d$ among $[n,k,d]$ binary LCD codes. Formally we can define it as follows.

\begin{definition}
$LCD\left[n,k\right]:=\max\left\{ d\mid {\mbox{there exists a binary}} \left[n,k,d\right]\ {\mbox{LCD code}}  \right\}. $
\end{definition}

Dougherty et al.~\cite{key-1} gave a few bounds on $LCD\left[n,k\right]$ and  exact values of $LCD\left[n, k\right]$ for $k=1$ only.

Now we obtain  exact values of $LCD\left[n, k\right]$ for $k=2$ for any $n$.

\begin {lemma}

$LCD[n,2]$$\leq \left\lfloor \frac{2n}{3} \right\rfloor$ for $n \geq 2$.

\end{lemma}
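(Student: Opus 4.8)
The plan is to use the Griesmer-type / Singleton-flavored counting together with Proposition~\ref{prop:1}. A $2$-dimensional binary code is spanned by two rows of a $2\times n$ generator matrix $G$. Up to column permutations (which preserve both the minimum distance and the LCD property, since permuting columns conjugates $GG^T$ by a permutation matrix and does not change its determinant), each column of $G$ is one of the four vectors $\binom{0}{0},\binom{1}{0},\binom{0}{1},\binom{1}{1}$. Let $a,b,c,e$ be the numbers of columns of each type, with $a+b+c+e=n$. The three nonzero codewords then have weights $b+e$, $c+e$, and $b+c$. So I would set $w_1=b+e$, $w_2=c+e$, $w_3=b+c$, note $w_1+w_2+w_3=2(b+c+e)\le 2n$, and hence the minimum weight $d=\min\{w_1,w_2,w_3\}$ satisfies $3d\le 2n$, i.e. $d\le\lfloor 2n/3\rfloor$. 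This already gives the bound for \emph{all} $[n,2]$ codes, LCD or not; the hypothesis $n\ge 2$ is only needed so that a $2$-dimensional code exists.

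**Key steps in order.** First, reduce to the column-multiplicity parametrization $(a,b,c,e)$ via column permutation equivalence, and record that equivalence preserves the LCD property (cite Proposition~\ref{prop:1} and that $\det((PG)(PG)^T)=\det(GG^T)$ for a permutation matrix $P$). Second, write down the three codeword weights $b+e,\,c+e,\,b+c$ and observe their sum is $2(b+c+e)\le 2n$. Third, conclude $3\min\{b+e,c+e,b+c\}\le 2n$, so $d\le 2n/3$, and since $d$ is an integer, $d\le\lfloor 2n/3\rfloor$. If the paper wants the LCD hypothesis to play a visible role, I could instead phrase it as: among LCD codes the bound certainly holds because it holds for the larger class of all linear codes — but I suspect the lemma is stated this way only for uniformity with the surrounding results, and the clean three-line weight-sum argument is the intended proof.

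**Main obstacle.** Honestly there is no serious obstacle here — the inequality is an easy consequence of the $2\times n$ column-type count. The only point requiring a little care is making sure the reduction to the four column types is justified: one must note that the all-zero column contributes nothing (so WLOG $a=0$, or just absorb it into the slack $n-(b+c+e)\ge 0$), and that scaling columns is trivial over $GF(2)$. A secondary subtlety, if one wants the bound to be tight and thus genuinely about LCD codes, is that the extremal configuration $w_1=w_2=w_3$ forces $b=c=e$ and $a=0$, which needs $3\mid n$; for such $G$ one would then check $GG^T=\begin{pmatrix}2e & e\\ e & 2e\end{pmatrix}\equiv\begin{pmatrix}0&1\\1&0\end{pmatrix}\pmod 2$, which is nonsingular, so equality is achieved by an LCD code — but that tightness discussion is presumably a separate remark rather than part of this lemma's proof.
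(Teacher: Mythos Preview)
Your argument is correct, but the paper takes a different and shorter route: it simply invokes the Griesmer bound $n\ge\sum_{i=0}^{k-1}\lceil d/2^{i}\rceil$ with $k=2$ to obtain $n\ge d+\lceil d/2\rceil\ge 3d/2$, hence $d\le\lfloor 2n/3\rfloor$. Your column-type parametrization $(a,b,c,e)$ is more elementary and self-contained, requiring no external bound, and it has the bonus that the weight triple $(b+e,\,c+e,\,b+c)$ is exactly the setup the paper uses in the \emph{next} two propositions to determine when equality is attained; so your approach anticipates the subsequent constructions. The paper's version, on the other hand, is a one-liner once Griesmer is available. Both proofs apply to all linear $[n,2]$ codes, not just LCD codes, as you correctly observe.

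One small slip in your parenthetical: permuting the \emph{columns} of $G$ by a permutation matrix $P$ gives $G'=GP$, and then $G'(G')^{T}=GPP^{T}G^{T}=GG^{T}$, so $GG^{T}$ is left \emph{unchanged}, not conjugated (conjugation occurs when permuting rows). This does not affect your argument, since the conclusion that the LCD property is preserved under column permutations remains correct --- indeed it is stronger than what you claimed.
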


\begin {proof}
By the Griesmer Bound~\cite{key-6}, any binary linear $[n,k,d]$ code satisfies
\[
n \ge \sum_{i=0}^{k-1} \left\lceil \frac{d}{2^i} \right\rceil.
\]
Letting $k=2$, we have $n \ge d +  \frac{d}{2}.$
Hence
\[d \le \left\lfloor \frac{2n}{3} \right\rfloor.\]

Therefore any LCD $[n,k,d]$ code must satisfy this inequality.

\end {proof}

\begin{proposition}\label{prop:2} Let $n \ge 2$. Then
$LCD[n,2] = \left\lfloor \frac{2n}{3} \right\rfloor$ for $n \equiv 1,\pm 2, {\mbox{or}}~ 3 \pmod 6$.

\end{proposition}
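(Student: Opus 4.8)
The plan is to match the Griesmer upper bound $\lfloor 2n/3 \rfloor$ from the preceding lemma by exhibiting, for each residue class $n \equiv 1, \pm 2, 3 \pmod 6$, an explicit binary $[n,2,d]$ LCD code with $d = \lfloor 2n/3 \rfloor$. Since the dimension is $2$, a generator matrix is just $G = \binom{\mathbf a}{\mathbf b}$ with two rows, and by Proposition~\ref{prop:1} the code is LCD precisely when $\det(GG^T) = (\mathbf a\cdot\mathbf a)(\mathbf b\cdot\mathbf b) - (\mathbf a\cdot\mathbf b)^2 \neq 0$ in $GF(2)$, i.e. when $(\mathbf a\cdot\mathbf a)(\mathbf b\cdot\mathbf b) = 1$ and $\mathbf a\cdot\mathbf b = 0$; equivalently, both $\mathbf a$ and $\mathbf b$ have odd weight and are orthogonal. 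So the whole problem reduces to: find two odd-weight, mutually orthogonal binary vectors of length $n$ whose three nonzero codeword weights (namely $\mathrm{wt}(\mathbf a)$, $\mathrm{wt}(\mathbf b)$, and $\mathrm{wt}(\mathbf a + \mathbf b)$) are all $\geq \lfloor 2n/3 \rfloor$.

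The natural construction is the "near-equitable" one: split the $n$ coordinates into three blocks and let the three nonzero codewords be (roughly) the three pairwise unions of two blocks, so that each nonzero codeword has weight about $2n/3$. Concretely, write $n = 3m + r$ with $r \in \{0,1,2\}$ (handling the negative residue $-2 \equiv 4$ by $r$-bookkeeping), take block sizes $s_1, s_2, s_3$ summing to $n$ and as equal as possible, and set $\mathbf a = (\mathbf 1_{s_1}, \mathbf 1_{s_2}, \mathbf 0_{s_3})$, $\mathbf b = (\mathbf 0_{s_1}, \mathbf 1_{s_2}, \mathbf 1_{s_3})$. Then $\mathbf a + \mathbf b = (\mathbf 1_{s_1}, \mathbf 0_{s_2}, \mathbf 1_{s_3})$, the three weights are $s_1+s_2,\ s_2+s_3,\ s_1+s_3$, and $\mathbf a\cdot\mathbf b = s_2 \bmod 2$. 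I would then go residue class by residue class, choosing the $s_i$ so that (i) each $s_i$ plus each other $s_j$ attains $\lfloor 2n/3\rfloor$, (ii) $s_1 + s_2$ and $s_2 + s_3$ are odd (odd-weight rows), and (iii) $s_2$ is even (orthogonality). For instance when $n \equiv 3 \pmod 6$, $n = 6t+3$, one can take $s_1 = s_3 = 2t+1$, $s_2 = 2t+1$ — but that makes $s_2$ odd, so instead perturb to $s_1 = 2t+2$, $s_2 = 2t$, $s_3 = 2t+1$ (weights $4t+2, 4t+1, 4t+3$, all $\geq 4t+2 = \lfloor 2n/3\rfloor$, with $s_2$ even and two odd row-weights) — and similarly tune the block sizes in the other three cases. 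It may be cleaner in one or two cases to use a generator matrix that is not of this pure block form (e.g. overlapping supports), but the same parity-and-weight accounting applies.

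The main obstacle is the simultaneous satisfaction of the three arithmetic constraints — the two odd-weight conditions and the one orthogonality (even $s_2$) condition — while keeping all three codeword weights exactly at the Griesmer value $\lfloor 2n/3\rfloor$. The parity conditions force the block sizes away from being perfectly balanced, and one must check that the unavoidable $\pm 1$ adjustments do not push any of the three weights below $\lfloor 2n/3\rfloor$; this is exactly where the hypothesis $n \equiv 1, \pm 2, 3 \pmod 6$ (rather than all $n$) enters, since the remaining residues $n \equiv 0 \pmod 6$ will be handled separately (presumably giving $\lfloor 2n/3\rfloor - 1$). So after setting up the block construction, the real work is a short finite case analysis — four residue classes — verifying in each that the chosen $s_1, s_2, s_3$ give odd-weight orthogonal rows with all nonzero weights equal to $\lfloor 2n/3 \rfloor$, which combined with the lemma's upper bound yields equality.
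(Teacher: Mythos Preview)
Your overall strategy---the three-block construction with rows $\mathbf a=(\mathbf 1_{s_1},\mathbf 1_{s_2},\mathbf 0_{s_3})$, $\mathbf b=(\mathbf 0_{s_1},\mathbf 1_{s_2},\mathbf 1_{s_3})$---is exactly the one the paper uses, but your analysis contains a genuine error that breaks the odd-$n$ cases.

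The mistake is in the LCD criterion. Over $GF(2)$ one has $\det(GG^T)=(\mathbf a\cdot\mathbf a)(\mathbf b\cdot\mathbf b)+(\mathbf a\cdot\mathbf b)$, so $\det(GG^T)=1$ in \emph{two} situations: either both rows have odd weight and even overlap, \emph{or} at least one row has even weight and the overlap $s_2$ is odd. You only allowed the first alternative. But that alternative forces $s_1,s_3$ odd and $s_2$ even, hence $s_1+s_2+s_3$ even; since any two linearly independent vectors in $GF(2)^n$ can be put in block form (up to a block of common zeros, which only hurts), this parity obstruction rules out your construction entirely when $n\equiv 1$ or $3\pmod 6$.

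This is not hypothetical: your worked example for $n=6t+3$ with $(s_1,s_2,s_3)=(2t+2,2t,2t+1)$ has weights $4t+2,\ 4t+1,\ 4t+3$, and $4t+1<4t+2=\lfloor 2n/3\rfloor$, so it does \emph{not} meet the bound (contrary to what you wrote). No perturbation within your parity constraints can fix this.

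The remedy is simply to use the second alternative for odd $n$. For $n=6m+3=3i$ with $i$ odd, take $s_1=s_2=s_3=i$: both row weights are $2i$ (even) and the overlap $i$ is odd, so $GG^T=\bigl(\begin{smallmatrix}0&1\\1&0\end{smallmatrix}\bigr)$ is invertible, and all three nonzero weights equal $2i=\lfloor 2n/3\rfloor$. For $n=6m+1$, take $(s_1,s_2,s_3)=(2m+1,2m-1,2m+1)$: again even row weights, odd overlap, minimum weight $4m=\lfloor 2n/3\rfloor$. Your treatment of $n\equiv\pm 2\pmod 6$ (where $GG^T=I_2$) is fine and matches the paper.
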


\begin{proof}
We only need to show the existence of LCD codes with minimum distance achieving the bound $d = \left\lfloor \frac{2n}{3} \right\rfloor$.\\
\indent (i) Let $n \equiv 1\,\,(\text{mod }6)$, i.e. $n=6m+1$ for some positive integer $m$. Consider the code with generator matrix
$$G= \left[ \begin{array}{c|c|c} 1\ldots 1 & 1 \ldots 1\ & 0 \ldots 0 \\ \undermat{2m+1}{0 \ldots 0} & \undermat{2m-1}{1 \ldots  1} & \undermat{2m+1}{1 \ldots 1} \end{array} \right].$$
\bigbreak
\bigbreak

This code has minimum weight $4m =  \left\lfloor \frac{2(6m+1)}{3} \right\rfloor$ and $GG^T = \left[ \begin{array}{cc} 0 & 1 \\ 1 & 0 \end{array} \right]$, i.e., $\det(GG^T) = 1 \neq 0$. Therefore this code is an LCD code.\\
\indent(ii) Let $n \equiv \pm 2\,\,(\text{mod }6)$, i.e., $n=6m+ 2$ for some non negative integer $m$, or $n=6m- 2$ for some positive integer $m$ . Consider the code with generator matrix
$$G= \left[ \begin{array}{c|c|c} 1\ldots 1 & 1 \ldots 1\ & 0 \ldots 0 \\ \undermat{2m+k}{0 \ldots 0} & \undermat{2m}{1 \ldots  1} & \undermat{2m+k}{1 \ldots 1} \end{array} \right]$$
for $k = 1,-1$.

If $k = 1$, this code has minimum weight $4m+1 =  \left\lfloor \frac{2(6m+2)}{3} \right\rfloor$ and $GG^T = \left[ \begin{array}{cc} 1 & 0 \\ 0 & 1 \end{array} \right]$, i.e., $\det(GG^T) = 1 \neq 0$. Therefore this code is an LCD code.

If $k = -1$, this code has minimum weight $4m-2 =  \left\lfloor \frac{2(6m-2)}{3} \right\rfloor$ and $GG^T = \left[ \begin{array}{cc} 1 & 0 \\ 0 & 1 \end{array} \right]$, i.e., $\det(GG^T) = 1 \neq 0$. Therefore this code is an LCD code.\\
\indent(iii) Let $n \equiv 3\,\,(\text{mod }6)$, i.e., $n=3i$ for some positive odd integer $i$. Consider the code with generator matrix
$$G= \left[ \begin{array}{c|c|c} 1\ldots 1 & 1 \ldots 1\ & 0 \ldots 0 \\ \undermat{i}{0 \ldots 0} & \undermat{i}{1 \ldots  1} & \undermat{i}{1 \ldots 1} \end{array} \right].$$
\bigbreak
This code has minimum weight $2i =  \left\lfloor \frac{2(3i)}{3} \right\rfloor$ and $GG^T = \left[ \begin{array}{cc} 0 & 1 \\ 1 & 0 \end{array} \right]$, i.e., $\det(GG^T) = 1 \neq 0$. Therefore this code is an LCD code.

\end{proof}

\begin{proposition}\label{prop:3} Let $n \ge 2$. Then
$LCD[n,2] = \left\lfloor \frac{2n}{3} \right\rfloor - 1$ for $n \equiv 0,-1 \pmod{6}$.
\end{proposition}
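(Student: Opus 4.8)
The plan is to combine the Lemma with a non-existence result and a matching construction. By the Lemma $LCD[n,2]\le D$, where $D:=\left\lfloor\frac{2n}{3}\right\rfloor$, so it suffices to prove (i) no binary $[n,2,D]$ LCD code exists when $n\equiv 0,-1\pmod 6$, which gives $LCD[n,2]\le D-1$, and (ii) a binary $[n,2,D-1]$ LCD code exists. The key tool is the structure of binary $[n,2]$ codes: after a permutation of coordinates — which preserves the minimum distance and, since $G'=GP$ with $P$ a permutation matrix gives $G'G'^{T}=GPP^{T}G^{T}=GG^{T}$, also preserves $GG^{T}$ and hence (by Proposition~\ref{prop:1}) the LCD property — such a code has generator matrix $G=G(a,b,c)$ consisting of $a$ columns $\binom{1}{0}$, $b$ columns $\binom{0}{1}$, $c$ columns $\binom{1}{1}$, and $n-a-b-c$ zero columns (at least two of $a,b,c$ positive, so $\mathrm{rank}\,G=2$). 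Its three nonzero codewords then have weights $a+c,\ b+c,\ a+b$, so $d=\min\{a+c,b+c,a+b\}$ and these weights sum to $2(a+b+c)$; moreover $GG^{T}\equiv\left[\begin{smallmatrix}a+c&c\\ c&b+c\end{smallmatrix}\right]\pmod 2$, so by Proposition~\ref{prop:1} the code is LCD if and only if the integer $(a+c)(b+c)+c$ is odd.

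For the non-existence step, write $n=6m$ or $n=6m-1$, so $D$ equals $4m$ or $4m-1$. Suppose an $[n,2,D]$ code exists. Then $a+c,b+c,a+b\ge D$, and adding these gives $2(a+b+c)\ge 3D$; since also $a+b+c\le n$, this forces $a+b+c=n$ (no zero columns) and forces the weight triple $(a+c,b+c,a+b)$ to be $(D,D,D)$ when $n=6m$ and a permutation of $(D,D,D+1)$ when $n=6m-1$. In the first case $a=b=c=2m$; in the second, $(a,b,c)$ is one of $(2m,2m-1,2m)$, $(2m-1,2m,2m)$, $(2m,2m,2m-1)$, according to which codeword carries the weight $D+1$. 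A short computation shows $(a+c)(b+c)+c$ is even in every one of these cases, so none of these codes is LCD, and therefore $LCD[n,2]\le D-1$.

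For the construction I would use the three-block form from Proposition~\ref{prop:2},
\[
G= \left[ \begin{array}{c|c|c} 1\ldots 1 & 1 \ldots 1 & 0 \ldots 0 \\ \undermat{p}{0 \ldots 0} & \undermat{q}{1 \ldots 1} & \undermat{r}{1 \ldots 1} \end{array} \right],
\]
for which $(a,c,b)=(p,q,r)$, the codeword weights are $p+q,\ q+r,\ p+r$, and $GG^{T}\equiv\left[\begin{smallmatrix}p+q&q\\ q&q+r\end{smallmatrix}\right]\pmod 2$. For $n=6m$ take $(p,q,r)=(2m,\,2m-1,\,2m+1)$: the weights are $4m-1,4m,4m+1$, so $d=4m-1=D-1$, and $GG^{T}\equiv\left[\begin{smallmatrix}1&1\\ 1&0\end{smallmatrix}\right]$ has odd determinant, hence the code is LCD. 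For $n=6m-1$ take $(p,q,r)=(2m-1,\,2m+1,\,2m-1)$: the weights are $4m,4m,4m-2$, so $d=4m-2=D-1$, and $GG^{T}\equiv\left[\begin{smallmatrix}0&1\\ 1&0\end{smallmatrix}\right]$ has odd determinant, hence the code is LCD. (All block sizes are positive because $n\ge 2$ together with $n\equiv 0,-1\pmod 6$ forces $m\ge 1$.) Thus $LCD[n,2]\ge D-1$, and with the previous step $LCD[n,2]=D-1$.

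The main obstacle is the case analysis inside the non-existence step for $n\equiv-1\pmod 6$: there the extra weight $D+1$ can sit on any of the three codewords, so there are three distinct triples $(a,b,c)$ to rule out, each requiring a check that $GG^{T}$ is singular modulo $2$. Everything else — the Griesmer-type counting that pins down the weight triple, and the verification that the two constructed matrices realize the claimed minimum distances — is routine arithmetic.
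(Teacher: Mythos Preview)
Your proof is correct and follows essentially the same approach as the paper: reduce an arbitrary $[n,2]$ code to the three-block generator matrix, rule out the top value $D=\lfloor 2n/3\rfloor$ by a finite case check on $GG^{T}$, and then exhibit an explicit LCD code of minimum distance $D-1$. Your treatment is in fact a bit more systematic than the paper's: you write the LCD criterion uniformly as the parity condition on $(a+c)(b+c)+c$, and your summing inequality $2(a+b+c)\ge 3D$ together with $a+b+c\le n$ cleanly forces $a+b+c=n$ (no zero columns) and pins down the weight multiset, whereas the paper argues these points more informally. The paper shortens the $n\equiv -1\pmod 6$ case to two triples by assuming the first row carries the minimum weight, and its explicit constructions use slightly different block sizes than yours, but these are cosmetic differences rather than a different method.
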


\begin {proof}
\indent(i) Let $n \equiv 0 \pmod{6}$. Consider the generator matrix $G$ in $(iii)$ of the proof of Proposition \ref{prop:2}, this time taking $i$ to be an even integer. If the weight of any row of $G$ is increased by one, the weight of the sum of the two rows is decreased by one. Hence, $G$ is the only generator matrix for a binary code that achieves the upper bound, up to equivalence. Clearly, det$(GG^T) = 0$ and so the code is not LCD. It follows that there are no LCD code with minimum distance $\left\lfloor \frac{2n}{3} \right\rfloor$ for $n\equiv 0 \,\,(\text{mod }6)$.

Next, consider the code with generator matrix
$$G= \left[ \begin{array}{c|c|c} 1\ldots 1 & 1 \ldots 1\ & 0 \ldots 0 \\ \undermat{i+1}{0 \ldots 0} & \undermat{i-1}{1 \ldots  1} & \undermat{i}{1 \ldots 1} \end{array} \right]$$
\\[0.2cm]
This code has minimum weight $2i-1 =  \left\lfloor \frac{2(3i)}{3} \right\rfloor - 1$. We note that $GG^T = \left[ \begin{array}{cc} 0 & 1 \\ 1 & 1 \end{array} \right]$, i.e., det$(GG^T) = 1 \neq 0$. Therefore this code is an LCD code.

\indent(ii) Let $C$ be a binary code of length $n \equiv -1\,\,(\text{mod }6)$, i.e., $n=3i-1$ for some positive even $i$. Without loss of generality, the generator matrix for $C$ can be expressed in the following form such that the first row is the codeword whose weight is the minimum weight $d$.
$$G= \left[ \begin{array}{c|c|c} 1\ldots 1 & 1 \ldots 1\ & 0 \ldots 0 \\ \undermat{i_1}{0 \ldots 0} & \undermat{i_2}{1 \ldots  1} & \undermat{i_3}{1 \ldots 1} \end{array} \right]$$
\bigbreak

Suppose $d = \left\lfloor \frac{2(3i-1)}{3} \right\rfloor = 2i-1$, i.e., $i_1+i_2 = 2i-1$. This implies that $i_3 = i$. Note that $i_2+i_3 \geq 2i-1$ which implies $i_2 \geq i-1$ and $i_1+i_3 \geq 2i-1$ which implies $i_2 \geq i-1$. This leaves only two possible cases: $(i_1, i_2, i_3) = (i-1,i,i),(i,i-1,i)$, each of which gives $GG^T = \left[ \begin{array}{cc} 1 & 0 \\ 0 & 0 \end{array} \right], \left[ \begin{array}{cc} 1 & 1 \\ 1 & 1 \end{array} \right]$, respectively. In both cases, det$(GG^T) = 0$ and therefore they are not LCD. So there is no LCD code with minimum distance $\left\lfloor \frac{2n}{3} \right\rfloor$ for $n\equiv -1 \,\,(\text{mod }6)$.

Consider the case where $(i_1, i_2, i_3) = (i-1,i-1,i+1)$. Then $G$ generates a code of minimum distance $2i-2 = \left\lfloor \frac{2(3i-1)}{3} \right\rfloor -1$. For this case, det$(GG^T)=1$ and hence the code is LCD. 
\end {proof}

\section{$LCD\left[n,k\right]$}
\label{sec:4}

We begin with a rather straightforward lemma in order to prove Proposition 4.

\begin{lemma}

If $n \ge 8$, any $[n,n-3,d]$ binary code $C$ satisfies $d\leq2$.

\end{lemma}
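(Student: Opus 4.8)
The plan is to work with the dual code and the standard parity-check characterization of minimum distance. Since $C$ is an $[n,n-3]$ code, its dual $C^{\perp}$ has dimension $3$; fix a generator matrix $H$ of $C^{\perp}$, which is then a parity-check matrix for $C$, i.e.\ a $3\times n$ binary matrix of rank $3$. Recall the elementary fact that $d=d(C)$ equals the smallest number of columns of $H$ that are linearly dependent over $GF(2)$.

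Next I would count: the columns of $H$ are vectors in $GF(2)^{3}$, and this space has only $2^{3}=8$ elements, of which $7$ are nonzero. Since $n\ge 8$, the $n$ columns cannot all be distinct and nonzero simultaneously — more precisely, either some column is the zero vector, or (if no column is zero) there are $n\ge 8$ columns but only $7$ possible nonzero values, so by the pigeonhole principle two columns coincide. In the first case a single column is dependent, so $d=1$; in the second case two equal nonzero columns sum to $\mathbf 0$, so $d\le 2$. Either way $d\le 2$, which is the claim. This also explains the hypothesis $n\ge 8=2^{3}$: it is exactly the threshold at which pigeonhole forces a dependency among at most two columns.

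There is essentially no hard step here; the only points requiring a little care are (a) invoking the column-dependency description of $d$ correctly, noting that $H$ having rank $3$ is guaranteed precisely because $\dim C^{\perp}=3$, and (b) getting the pigeonhole bookkeeping right by remembering to include the zero vector among the $8$ elements of $GF(2)^{3}$. I would state the column-dependency fact explicitly (or cite a standard coding-theory reference) so the argument is self-contained, and then the proof is two or three lines.
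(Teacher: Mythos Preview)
Your argument is correct. It differs from the paper's proof in that you work on the parity-check side, while the paper works on the generator side: it writes $G=[I_{n-3}\mid A]$ with $A$ an $(n-3)\times 3$ matrix, first invokes the Singleton bound together with the nonexistence of nontrivial binary MDS codes to rule out $d\ge 4$, and then assumes $d=3$, which forces every row of $A$ to have weight at least $2$; since there are only $\binom{3}{2}+\binom{3}{3}=4$ such vectors but $n-3\ge 5$ rows, two rows of $A$ coincide, yielding a codeword of weight $2$ and a contradiction. Your route via the columns of a parity-check matrix is more self-contained: it avoids appealing to the Singleton bound and to the MDS classification, and the pigeonhole count $n\ge 8=2^{3}$ transparently explains the threshold. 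The paper's approach, on the other hand, foreshadows the generalization in the next proposition, where the same ``rows of $A$ of weight $\ge 2$'' count is reused for arbitrary codimension $i$.
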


\begin{proof} Let $G$ be a standard generator matrix of $C$, i.e.,  $G=\left[I_{n-3}\mid A\right]$ where $A$ is an $(n-3)\times3$ matrix.
By the Singleton bound, $d$$\leq$4.
But it is well known there is no non-trivial binary code achieving the bound.
Thus, we may say that $d\leq3$.
Suppose $d= 3$. Then each row of $A$ must have weight at least 2.
However there are only $\left(\begin{array}{c}
3\\
2
\end{array}\right)$+$\left(\begin{array}{c}
3\\
3
\end{array}\right)$= 4 possible choices for the rows of $A$.
Note that $A$ has at least 5 rows and this contradict our assumption that $d=3$.
Hence $d$ $\leq$2.

\end{proof}

Motivated by the above lemma, we can consider more general dimension $n-i$ rather than $n-3$ as follows.

\begin{proposition}
Given $i\geq2,$ $LCD[n,n-i]=2$ for all $n\geq2^{i}$.
\end{proposition}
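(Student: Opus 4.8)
The plan is to prove the two inequalities $LCD[n,n-i]\leq 2$ and $LCD[n,n-i]\geq 2$ separately, for $n\geq 2^i$ and $i\geq 2$. The lower bound is the easy direction: since $n\geq 2^i\geq 4$, one exhibits an explicit $[n,n-i,2]$ LCD code. A natural candidate is the code generated by $G=[I_{n-i}\mid A]$ where the rows of $A$ are chosen so that every row has weight at least $1$ (giving minimum distance $\geq 2$) and so that $GG^T=I_{n-i}+AA^T$ is nonsingular over $GF(2)$; by Proposition~\ref{prop:1} this makes the code LCD. For instance, taking each row of $A$ equal to a fixed weight-one vector $e_1\in GF(2)^i$ gives $AA^T=J$ (the all-ones matrix), and one checks $I+J$ is invertible over $GF(2)$ precisely when $n-i$ is even; a small adjustment of one or two rows of $A$ handles the parity issue while keeping all row weights positive, so $d=2$ is attained. (One must also rule out $d\geq 3$, but that follows from the upper bound below.)

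For the upper bound, the strategy generalizes the counting argument of the preceding lemma. Let $C$ be any $[n,n-i,d]$ binary code with $d\geq 3$; put it in standard form $G=[I_{n-i}\mid A]$ with $A$ an $(n-i)\times i$ matrix. If $d\geq 3$ then $G$ has no zero column among the $I_{n-i}$ part issues aside, and more to the point every row of $G$ has weight $\geq 3$, so every row of $A$ has weight $\geq 2$. The number of vectors in $GF(2)^i$ of weight $\geq 2$ is $2^i-1-i$. Meanwhile $A$ has $n-i$ rows. When $n\geq 2^i$ we get $n-i\geq 2^i-i > 2^i-1-i$, so by pigeonhole two rows of $A$ coincide, say rows $r$ and $s$. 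Then the codeword $g_r+g_s$ is supported only on coordinates $r$ and $s$ of the identity block, hence has weight exactly $2$, contradicting $d\geq 3$. Therefore $d\leq 2$, i.e.\ $LCD[n,n-i]\leq 2$.

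The main obstacle is reconciling the two halves so that the value is exactly $2$ rather than, say, $1$ or undefined: I need the lower-bound construction to actually produce an LCD code of minimum distance $2$ (not $1$), which forces the $GG^T$ nonsingularity bookkeeping to be done carefully, with an explicit fix for the parity of $n-i$. A secondary subtlety is the hypothesis $n\geq 2^i$: the pigeonhole step only needs $n-i > 2^i-1-i$, i.e.\ $n\geq 2^i$, so the bound is tight in the sense that the argument uses exactly this threshold; I should double-check the boundary case $n=2^i$ (where there are exactly $2^i-i$ rows and $2^i-1-i$ allowed weight-$\geq 2$ patterns, still forcing a repeat) and confirm $i\geq 2$ is needed so that weight-$\geq 2$ vectors in $GF(2)^i$ exist and the count $2^i-1-i$ is meaningful. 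Once these edge cases are checked, combining the two inequalities yields $LCD[n,n-i]=2$ for all $n\geq 2^i$.
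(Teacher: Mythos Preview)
Your approach matches the paper's: split into upper and lower bounds, use a counting/pigeonhole argument on the rows of $A$ in the standard form $G=[I_{n-i}\mid A]$ for the upper bound, and exhibit an explicit $[n,n-i,2]$ LCD code for the lower bound. Your upper-bound argument is in fact stated a bit more carefully than the paper's (which concludes somewhat loosely that ``there exists a row of weight~$1$ in $A$'' rather than invoking pigeonhole to find two equal rows and hence a weight-$2$ codeword).

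The one soft spot is your lower-bound construction. Taking every row of $A$ equal to $e_1$ gives $GG^T=I+J$, invertible only when $n-i$ is even, and the promised ``small adjustment of one or two rows'' is left unspecified; note that some natural fixes (e.g.\ changing a single row from $e_1$ to $e_2$) make $GG^T$ singular again, so this step does require care. The paper sidesteps the parity issue entirely with a cleaner choice: take every row of $A$ to be the all-ones vector in $GF(2)^i$ when $i$ is even, or the vector with ones in the first $i-1$ coordinates and a zero in the last when $i$ is odd. In either case every row of $A$ has even weight, so $AA^T=0$ and $GG^T=I_{n-i}$ outright, independent of the parity of $n-i$; two equal rows of $A$ then give a weight-$2$ codeword, so $d=2$.
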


\begin{proof} Let $G$ be a standard generator matrix of an $[n,n-i,d]$ binary code $C$, i.e.,
$G=\left[I_{n-i}\mid A\right]$ where $A$ is an $\left(n-i\right)\times i$ matrix.
By the Singleton bound, $d$$\leq i+1$. But there is no non-trivial binary code achieving the bound.
Thus, we may say that $d\leq i$.
If the minimum distance of $C$ is at least 3, each row of $A$ must have weight at least 2.
And there are $\left(\begin{array}{c}
i\\
2
\end{array}\right)+\left(\begin{array}{c}
i\\
3
\end{array}\right)+\cdots+\left(\begin{array}{c}
i\\
i
\end{array}\right)=2^{i}-i-1$ possible choices for the rows of $A$.
Thus, if $n-i$ (number of rows in $A$) $>$ $2^{i}-i-1$, then there exists a row of weight 1 in $A$.
This forces the minimum distance of $C$ to be 2, which is a contradiction.
Therefore, $d\leq2$ for all $n\geq2^{i}.$
Since this statement holds true for any linear code, it holds true for LCD codes as well,
i.e., $LCD\left[n,n-i\right]\leq2$ for all $n\geq2^{i}$.

 Next, we show that there exists an $[n,n-i,2]$ $LCD$ code for $n\geq2^{i}$.
For $i$ even,
let $G=\left[I_{n-i}\mid\undermat{i}{\mathbf{1}\mathbf{1}\cdots\mathbf{1}}\right]$, and for $i$ odd, let $G=\left[I_{n-i}\mid\undermat{i}{\mathbf{1}\mathbf{1}\cdots\mathbf{1}\mathbf{0}}\right]$ where $\mathbf{1}$
\\[0.1cm]

\noindent denotes the all one vector and $\mathbf{0}$ the all zero vector, both of which are of size $(n-i) \times 1$.
In both cases,  $GG^{T}=I_{n-i}$.
Thus, $G$ is a generator matrix for the $[n,n-i]~ LCD$ code with minimum distance 2.

 Hence $LCD[n,n-i]=2$ for all $n\geq2^{i}$.

\end{proof}

So far we have shown the exact value of $LCD[n,2]$. The relation between $LCD[n,k]$ and $LCD[n,k-1]$(or $LCD[n,k-2]$) was unknown before. Using the idea of principal submatrices, we have Proposition \ref{prsequence} below.

\
\begin{definition}
Let $A$ be a $k\times k$ matrix over a field. An $m\times m$
submatrix $P$ of $A$ is called a {\em principal submatrix} of $A$ if $P$
is obtained from $A$ by removing all rows and columns of $A$ indexed by the
same set $\{i_1,i_2,\ldots,
i_{n-m}\} \subset \{1,2,\ldots,k\}$.
\end{definition}

\begin{definition} (\cite{key-5}) Let $A$ be a $k\times k$ symmetric matrix over
a field. The {\em principal rank characteristic sequence} of $A$ (simply,
pr-sequence of $A$ or $pr(A)$) is defined as $pr(A)=r_{0}]r_{1}r_{2}\ldots r_{k}$ where for $1 \le m \le k$
\[
r_{m}=\begin{cases}
1 & \text{if \ensuremath{A} has an $m \times m$ principal submatrix of rank \ensuremath{m}}\\
0 & \text{otherwise}
\end{cases}
\]
 For convenience, define $r_{0}=1$ if and only if $A$ has a $0$ in the diagonal.
\end{definition}

\begin{proposition}\label{prsequence}$(\cite{key-5})$ For $k\geq2$ over a field with characteristic
$2$, a principal rank characteristic sequence of a $k\times k$ symmetric matrix $A$ is attainable if and
only if it has one of the following forms:
\begin{enumerate}
\begin{multicols}{3}
\item[$(i)$] $0]1\,\overline{1}\,\overline{0}$
\item[$(ii)$] $1]\overline{01}\,\overline{0}$
\item[$(iii)$] $1]1\,\overline{1}\,\overline{0}$
\end{multicols}
\end{enumerate}
where $\overline{1}=11\ldots1$ (or empty), $\overline{0}=00\ldots0$
(or empty), $\overline{01}=0101\ldots01$ (or empty).
\end{proposition}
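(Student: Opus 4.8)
The plan is to organize everything around the diagonal of the symmetric matrix $A$, which splits the problem into exactly the three forms. Over a field of characteristic $2$, call $A$ \emph{alternating} when its diagonal is zero; this is equivalent to $x^{T}Ax=0$ for all $x$, since $x^{T}Ax=\sum_i A_{ii}x_i^2$ in characteristic $2$. By definition $r_0=1$ iff $A$ has a zero on the diagonal, so the three mutually exclusive cases are: $A$ has no zero on the diagonal ($r_0=0,\,r_1=1$); $A$ has both a zero and a nonzero diagonal entry ($r_0=1,\,r_1=1$); and $A$ is alternating ($r_0=1,\,r_1=0$). I would show these produce exactly forms $(i)$, $(iii)$, and $(ii)$, respectively. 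The argument rests on three facts: (a) an alternating matrix over a characteristic-$2$ field has even rank, so every odd-order alternating matrix is singular; (b) for a \emph{nonsingular} symmetric $B$, $B$ is alternating iff $B^{-1}$ is alternating, which follows since $y^{T}B^{-1}y=x^{T}Bx$ under the substitution $y=Bx$; and (c) Jacobi's identity $\det B[T^{c}]=\det(B)\det(B^{-1}[T])$, so $B[T^{c}]$ is nonsingular iff $B^{-1}[T]$ is.

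The core tool is a descent lemma: if $B$ is a nonsingular $m\times m$ principal submatrix of $A$, then $A$ has a nonsingular principal submatrix whose order drops by $1$ when $B$ is non-alternating, and by $2$ when $B$ is alternating (forcing $m$ even). Indeed, by (c) a nonsingular $(m-1)$-order principal submatrix of $B$ exists iff $B^{-1}$ has a nonzero diagonal entry, which by (b) happens iff $B$ is non-alternating; and when $B$ is alternating, choosing $i,j$ with $(B^{-1})_{ij}\neq0$ (such a pair exists since $B^{-1}$ is nonsingular alternating) yields a nonsingular $(m-2)$-order principal submatrix, itself alternating. Writing $R=\{m\ge1:r_m=1\}$ and $t=\max R$, iterating the descent settles two of the three cases. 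In case $(i)$ every principal submatrix inherits a nonzero diagonal, hence is non-alternating, so descent by $1$ shows $R=\{1,\dots,t\}$, giving $0]1\,\overline{1}\,\overline{0}$. In the alternating case every principal submatrix is alternating, so all odd $r_m$ vanish by (a) while descent by $2$ from the top fills in all even orders down to $2$; thus $R=\{2,4,\dots,t\}$, giving $1]\overline{01}\,\overline{0}$.

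The delicate case is $(iii)$, and this is where I expect the main obstacle. Here $A$ has a nonzero diagonal entry $A_{pp}$, but a nonsingular principal submatrix $B=A[S]$ realizing some $m\in R$ might be alternating, in which case the descent only lowers the order by $2$, whereas form $(iii)$ requires every intermediate order to be present. Since $B$ alternating forces $p\notin S$, I would border an alternating nonsingular $(m-2)$-order submatrix $C=A[S\setminus\{i,j\}]$ (chosen via $(B^{-1})_{ij}\neq0$, which makes $C$ nonsingular by (c)) by the row and column indexed by $p$. The resulting $(m-1)\times(m-1)$ matrix has determinant $A_{pp}\det C - w^{T}\operatorname{adj}(C)\,w=A_{pp}\det C-\det(C)\,w^{T}C^{-1}w$, and the crucial point is that $w^{T}C^{-1}w=0$ because $C^{-1}$ is alternating by (b); hence the determinant equals $A_{pp}\det C\neq0$. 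This produces a nonsingular principal submatrix of the missing order $m-1$, so again $R=\{1,\dots,t\}$ and the sequence is $1]1\,\overline{1}\,\overline{0}$. Together with the first two cases this proves that every attainable sequence has one of the three forms.

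For the converse I would exhibit, for each admissible pattern, a block-diagonal symmetric matrix over $GF(2)$ realizing it, using that for $A=A_1\oplus A_2$ one has $R(A)=R(A_1)+R(A_2)$ (sumset, with $0$ included for the empty submatrix) and that $A$ has a zero diagonal entry iff some block does. The building blocks are $[1]$ (nonzero diagonal, $R=\{0,1\}$), $\left[\begin{smallmatrix}0&1\\1&0\end{smallmatrix}\right]$ (alternating, $R=\{0,2\}$), the $1\times1$ zero block (zero diagonal, $R=\{0\}$), and the all-ones block $J_m$ (nonzero diagonal, rank $1$, $R=\{0,1\}$). For instance $I_{a-1}\oplus J_{k-a+1}$ realizes $0]1^{a}0^{k-a}$; a direct sum of $s$ copies of $\left[\begin{smallmatrix}0&1\\1&0\end{smallmatrix}\right]$ together with zero blocks realizes $1]\overline{01}\,\overline{0}$; and $I_a\oplus 0_b$ realizes $1]1\,\overline{1}\,\overline{0}$ when the trailing block $\overline{0}$ is nonempty, splicing in a $2\times2$ alternating block to supply the required zero diagonal entry when $\overline{0}$ is empty. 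Checking the small boundary patterns (empty $\overline{1}$, $\overline{01}$, or $\overline{0}$) is routine bookkeeping that I would verify case by case.
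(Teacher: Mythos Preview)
The paper does not prove this proposition at all; it is quoted verbatim from the cited reference \cite{key-5} and used as a black box in the proof of Proposition~\ref{prop:LCD}. So there is no ``paper's own proof'' to compare against.

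Your argument is correct and is essentially the standard proof. The three diagonal cases are exactly the right organizing principle; the descent lemma via Jacobi's complementary-minor identity together with fact~(b) is the key mechanism, and your handling of case~$(iii)$ by bordering an alternating nonsingular $C$ with the row and column through a nonzero diagonal entry $A_{pp}$, then killing the cross term $w^{T}C^{-1}w$ because $C^{-1}$ is alternating, is precisely the idea needed to fill the ``missing'' odd orders. The attainability constructions are also fine. One small boundary issue worth flagging: your construction $I_{k-2}\oplus\left[\begin{smallmatrix}0&1\\1&0\end{smallmatrix}\right]$ for the full-rank pattern $1]11\cdots1$ does not work when $k=2$, since it then has zero diagonal and realizes $1]01$ instead; but a matrix such as $\left[\begin{smallmatrix}1&1\\1&0\end{smallmatrix}\right]$ handles that case, and you already anticipated such edge cases in your final remark.
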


\begin{proposition}\label{prop:LCD}
We have the following:
\begin{enumerate}
\item[$(i)$] If $k\geq3$ and $k$ is odd,
\[
LCD[n,k]\leq LCD[n,k-1]
\]
 for any $n\geq k$.
\item[$(ii)$] If $k\geq4$ and $k$ is even,
\[
LCD[n,k]\leq LCD[n,k-2]
\]
 for any $n\geq k$.
\end{enumerate}
\end{proposition}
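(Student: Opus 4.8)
The plan is to start from an $[n,k,d]$ binary LCD code $C$ attaining the value $d=LCD[n,k]$, fix a generator matrix $G$ of $C$, and manufacture from it an LCD code of dimension $k-1$ (if $k$ is odd) or $k-2$ (if $k$ is even) whose minimum distance is still at least $d$. Since $n\ge k$, the quantities $LCD[n,k-1]$ and $LCD[n,k-2]$ are defined, and the claimed inequalities then follow at once from the maximality in the definition of $LCD[n,\cdot]$. The manufacturing step is just to delete one (resp.\ two) suitably chosen rows of $G$. If $S\subseteq\{1,\dots,k\}$ is the set of deleted row indices, the resulting matrix $G'$ still has full row rank $k-|S|$ because the rows of $G$ are linearly independent, so $G'$ generates an $[n,k-|S|,d']$ code $C'$; and $C'\subseteq C$, so $d'\ge d$ for free. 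The only thing that can go wrong is the LCD property of $C'$, which by Proposition~\ref{prop:1} is equivalent to nonsingularity of $G'(G')^{T}$.

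The crucial remark is that $G'(G')^{T}$ is nothing but the principal submatrix of $A:=GG^{T}$ indexed by the complement of $S$ (a one-line block computation: its $(a,b)$ entry is the inner product of rows $i_a$ and $i_b$ of $G$, where $i_1<\dots<i_{k-|S|}$ list $\{1,\dots,k\}\setminus S$). Here $A$ is a $k\times k$ symmetric matrix over $GF(2)$, and it is nonsingular because $C$ is LCD. So both parts of the proposition reduce to a purely linear-algebraic claim: \emph{a nonsingular $k\times k$ symmetric matrix over $GF(2)$ has a nonsingular $(k-1)\times(k-1)$ principal submatrix when $k$ is odd, and a nonsingular $(k-2)\times(k-2)$ principal submatrix when $k$ is even.} I would derive this from the classification of principal rank characteristic sequences in Proposition~\ref{prsequence}: nonsingularity of $A$ means $r_k=1$ in $pr(A)=r_0]r_1r_2\cdots r_k$, and then one only has to read off $r_{k-1}$ and $r_{k-2}$ in each of the three admissible forms, under the constraint $r_k=1$.

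Carrying this out: in form $(i)$, $0]1\,\overline{1}\,\overline{0}$, the condition $r_k=1$ forces the trailing block $\overline{0}$ to be empty, so $r_1=r_2=\cdots=r_k=1$ and in particular $r_{k-1}=r_{k-2}=1$; form $(iii)$, $1]1\,\overline{1}\,\overline{0}$, likewise forces $r_0=r_1=\cdots=r_k=1$. In form $(ii)$, $1]\overline{01}\,\overline{0}$, the condition $r_k=1$ forces $\overline{0}$ to be empty and forces the alternating block $\overline{01}=0101\cdots01$ (which starts with a $0$ in position $1$) to take value $1$ in position $k$; this is possible only when $k$ is even, in which case $r_{k-1}=0$ while $r_{k-2}=1$. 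Hence: if $k$ is odd, only forms $(i)$ and $(iii)$ are possible and both yield $r_{k-1}=1$; if $k$ is even, all three forms are possible, but every one of them yields $r_{k-2}=1$ (using $k\ge4$, so that $k-2\ge2$). Translating back, $r_{k-1}=1$ supplies a nonsingular $(k-1)$-principal submatrix of $A$ whose complementary single index we take as $S$, and $r_{k-2}=1$ supplies a nonsingular $(k-2)$-principal submatrix whose complementary pair of indices we take as $S$; deleting those rows of $G$ produces the desired LCD codes, completing $(i)$ and $(ii)$.

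The argument is short, so the only real care is in the bookkeeping: checking that row deletion genuinely lands on a principal submatrix of $GG^{T}$, and handling the smallest cases $k=3$ in $(i)$ and $k=4$ in $(ii)$ where the indices $k-1,k-2$ sit near the beginning of the pr-sequence. I do not expect a serious obstacle, but it is worth noting that the drop by $2$ in part $(ii)$ is unavoidable rather than an artefact of the proof: already for $A=\left(\begin{smallmatrix}0&1\\1&0\end{smallmatrix}\right)$, which is nonsingular with $pr(A)=1]01$, both $1\times1$ principal submatrices are singular, so one cannot in general descend by a single dimension when $k$ is even.
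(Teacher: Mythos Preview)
Your proof is correct and follows essentially the same route as the paper: both arguments set $A=GG^{T}$, invoke Proposition~\ref{prsequence} to see that $r_{k-1}=1$ when $k$ is odd (since form $(ii)$ is then excluded) and $r_{k-2}=1$ when $k$ is even, and delete the corresponding row(s) of $G$ to obtain an LCD subcode of the required dimension. Your write-up is in fact a bit more explicit than the paper's in verifying $r_{k-2}=1$ in form $(ii)$ and in noting that the single-step descent can genuinely fail for even $k$.
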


\begin{proof}
\indent $(i)$ It suffices to show that any binary $[n,k]$ LCD code $C$ has an
$[n,k-1]$ LCD subcode for any odd $k \ge 3$. Let $G$ be a $k\times n$ generator matrix of $C$. Let $A=GG^{T}$ which is symmetric.
Then $rank(A)=k$. Since $k$ is odd, case $(ii)$ of Propostion \ref{prsequence} is
not possible. Thus the only possible cases  are $(i)$ and $(iii)$ of Proposition \ref{prsequence}, which are $0]11\ldots1$
and  $1]11\ldots1$, respectively. Hence, there exists a principal submatrix $P_{1}$
of rank $k-1$ which is obtained from $A$ by deleting some $i^{th}$
row and column of $A$ ($1\leq i\leq k)$.

Define $G_{1}$ to be a $(k-1)\times n$ matrix obtained from $G$
by deleting the $i^{th}$ row of $G$. Since $G_{1}G_{1}^{T}=P_{1}$
and $rank(P_{1})=k-1 \ne 0$, $P_{1}$ is invertible. Then the linear code
$C_{1}$ with generator matrix $G_{1}$ is LCD as well.

\medskip

$(ii)$ It suffices to show that any binary $[n,k]$ LCD code $C$ has an
$[n,k-2]$ LCD subcode for any even $k\ge 4$. Let $G$ be a $k\times n$ generator matrix of $C$ and $A=GG^{T}$.
Then $rank(A)=k$ since $C$ is LCD. By Propostion \ref{prsequence}, we have the following
three cases.
\begin{multicols}{3}
\item[$(i)$] $0]11\ldots1$
\item[$(ii)$] $1]0101\ldots01$
\item[$(iii)$] $1]11\ldots1$
\end{multicols}

So there exists a principal submatrix $P_{2}$ of rank $k-2$ which
is obtained from $A$ by deleting some $i^{th}$, $j^{th}$ rows
and columns of $A$ ($1\leq i\neq j\leq k$).

Define $G_{2}$ to be a $(k-2)\times n$ matrix obtained from $G$
by deleting the $i^{th}$ and $j^{th}$ rows of $G$. Since $G_{2}G_{2}^{T}=P_2$
and $rank(P_{2})=k-2 \ne 0$, $P_{2}$ is invertible. Then the linear $[n,k-2]$
code $C_{2}$ with generator matrix $G_{2}$ is LCD as well.

Since the minimum distance
of a code is always less than or equal to the minimum distance of a subcode,
this completes the proof of $(a)$ and $(b)$.

\end{proof}

In Table 1 we give exact values of $LCD[n,k]$ for $1 \le k \le n \le 12$.
Based on Proposition \ref{prop:LCD} and Table 1, we conjecture the following.

\medskip

{\bf{Conjecture}}
If $2 \le k \le n$, then $LCD[n,k]\leq LCD[n,k-1]$.

(Note: It suffices to show that this is true when $k$ is even.)

\begin{table}
{\begin{center}
{

\begin{tabular}{|c|c|c|c|c|c|c|c|c|c|c|c|c|}
\hline
$n/k$ & 1 & 2 & 3 & 4 & 5 & 6 & 7 & 8 & 9 & 10 & 11 & 12\tabularnewline
\hline
\hline
1 & 1 &  &  &  &  &  &  &  &  &  &  & \tabularnewline
\hline
2 & 1 & 1 &  &  &  &  &  &  &  &  &  & \tabularnewline
\hline
3 & 3 & 2 & 1 &  &  &  &  &  &  &  &  & \tabularnewline
\hline
4 & 3 & 2 & 1 & 1 &  &  &  &  &  &  &  & \tabularnewline
\hline
5 & 5 & 2 & 2 & 2 & 1 &  &  &  &  &  &  & \tabularnewline
\hline
6 & 5 & 3 & 2 & 2 & 1 & 1 &  &  &  &  &  & \tabularnewline
\hline
7 & 7 & 4 & 3 & 2 & 2 & 2 & 1 &  &  &  &  & \tabularnewline
\hline
8 & 7 & 5 & 3 & 3 & 2 & 2 & 1 & 1 &  &  &  & \tabularnewline
\hline
9 & 9 & 6 & 4 & 4 & 3 & 2 & 2 & 2 & 1 &  &  & \tabularnewline
\hline
10 & 9 & 6 & 5 & 4 & 3 & 3 & 2 & 2 & 1 & 1 &  & \tabularnewline
\hline
11 & 11 & 6 & 5 & 4 & 4 & 4 & 3 & 2 & 2 & 2 & 1 & \tabularnewline
\hline
12 & 11 & 7 & 6 & 5 & 4 & 4 & 3 & 2 & 2 & 2 & 1 & 1\tabularnewline
\hline
\end{tabular}}
\caption {$LCD[n,k]\: for\:1\leq k \le n \leq12$}
\end{center}
}
\end{table}

\section{$LCK\left[n,d\right]$}
\label{sec:5}
We define another combinatorial function $LCK\left[n,d\right]$.

\begin{definition}
$LCK[n,d]:=\max\{k~|~{\mbox{there exists a binary }} [n,k,d]~ {\mbox{LCD code}}\}$
\end{definition}
For convenience, define $LCK[n,d]=0$ if and only if  there is no LCD code with the given $n$ and $d$.

\begin{table}[H]
\begin{center}
{
\begin{tabular}{|c|c|c|c|c|c|c|c|c|c|c|c|c|}
\hline
$n/d$ & 1 & 2 & 3 & 4 & 5 & 6 & 7 & 8 & 9 & 10 & 11 & 12 \tabularnewline
\hline
\hline
1 & 1 &  &  &  &  &  &  &  &  &  &  &  \tabularnewline
\hline
2 & 2 & 0 &  &  &  &  &  &  &  &  &  &     \tabularnewline
\hline
3 & 3 & 2 & 1 &  &  &  &  &  &  &  &  &
    \tabularnewline
\hline
4 & 4 & 2 & 1$^{\ast}$ &0 &  &  &  &  &  &  &  &    \tabularnewline
\hline
5 & 5 & 4 & 1 &0  & 1 &  &  &  &  &  &  &    \tabularnewline
\hline
6 & 6 & 4 & 2 & 2 & 1$^{\ast}$ &0  &  &  &  &  &  &     \tabularnewline
\hline
7 & 7 & 6 & 3 & 2 & 1$^{\ast}$ & 0 & 1 &  &  &  &  &   \tabularnewline
\hline
8 & 8 & 6 & 4$^{\ast}$ &2  & 2 &0  & 1$^{\ast}$ &0  &  &  &  &     \tabularnewline
\hline
9 & 9 & 8 & 5 & 4 &2$^{\ast}$  & 2 &1  & 0 & 1 &  &  &      \tabularnewline
\hline
10 & 10 & 8 & 6 & 4 & 3 & 2 & 1 & 0 & 1$^{\ast}$ &0  &  &    \tabularnewline
\hline
11 & 11 & 10 & 7 & 6 & 3 & 2 & 1 & 0 &1  & 0 & 1 &     \tabularnewline
\hline
12 & 12 & 10 & 7 & 6 & 4 & 3 & 2$^{\ast}$ & 0 &1  &0  & 1$^{\ast}$ &0  \tabularnewline
\hline
\end{tabular}}
\caption {$LCK[n,d]\: for\:1\leq d \le n \leq12$}
\end{center}
\end{table}

 It is noticeable in Table 2  that more zeros  appear as $n$  gets larger. Dougherty et al. \cite{key-1} showed   $LCK[n,d]=0$ for  $n$  even and when $d=n$. Now we show that this is a special case of the following general proposition.

\begin{proposition}\label{prop:lck1}
\begin{itemize}
\item[(i)] Suppose that $n$ is even, $k\ge 1$, and $i\ge 0$. If $n\ge 6i$, then there is no $[n,k,n-2i]$ LCD code, i.e., $LCK[n,n-2i]=0$.
\item[(ii)] Suppose that $n$ is odd, $k\ge 1$, and $i\ge 0$. If $n> 6i+3$, then there is no $[n,k,n-2i-1]$  LCD code, i.e., $LCK[n,n-2i-1]=0$.
\end{itemize}
\end{proposition}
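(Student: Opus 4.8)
The plan is to argue by contradiction in both parts: assume that a binary $[n,k,d]$ LCD code $C$ with the stated parameters exists, and split into the cases $k=1$ and $k\ge 2$.

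For $k=1$, write $C=\langle v\rangle$, so that $d=\mathrm{wt}(v)$ and $GG^{T}=(v\cdot v)=(d\bmod 2)$. By Proposition~\ref{prop:1}, $C$ is LCD if and only if this $1\times 1$ matrix is nonzero, i.e.\ if and only if $d$ is odd. But in case (i) we have $d=n-2i$ with $n$ even, and in case (ii) $d=n-2i-1$ with $n$ odd, so $d$ is even in both cases; hence no one-dimensional code with this minimum distance is LCD.

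For $k\ge 2$, the idea is to collapse everything down to the already-known quantity $LCD[n,2]$. Iterating Proposition~\ref{prop:LCD}: if $k$ is odd (hence $k\ge 3$), part $(i)$ gives $LCD[n,k]\le LCD[n,k-1]$ with $k-1$ even, and then part $(ii)$, applied repeatedly, lowers the even second argument by $2$ at each step down to $2$; so $d\le LCD[n,k]\le LCD[n,2]$ for every $k\ge 2$. Combine this with the bound $LCD[n,2]\le\lfloor 2n/3\rfloor\le 2n/3$ established in Section~\ref{sec:3}. In case (i), $n-2i=d\le 2n/3$ forces $n\le 6i$, which together with $n\ge 6i$ gives $n=6i$; then $n\equiv 0\pmod 6$, so Proposition~\ref{prop:3} sharpens the bound to $LCD[6i,2]=\lfloor 2n/3\rfloor-1=4i-1$, contradicting $d=4i$. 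In case (ii), $n-2i-1=d\le 2n/3$ forces $n\le 6i+3$, contradicting $n>6i+3$. In either case $C$ cannot exist, so there is no $[n,k,d]$ LCD code for any $k\ge 1$; by the convention defining $LCK$ we conclude $LCK[n,n-2i]=0$ in case (i) and $LCK[n,n-2i-1]=0$ in case (ii).

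The one genuinely delicate point is the boundary value $n=6i$ in part (i): there the crude inequality $d\le 2n/3$ holds with equality, so one really does need the finer evaluation $LCD[6i,2]=4i-1$ from Proposition~\ref{prop:3}, the residue class being exactly where $LCD[n,2]$ falls below $\lfloor 2n/3\rfloor$. Part (ii) needs no boundary analysis because its hypothesis is a strict inequality, and the strictness is essential: at $n=6i+3$ one has $n\equiv 3\pmod 6$, and Proposition~\ref{prop:2} supplies a $[6i+3,2,4i+2]$ LCD code with $d=n-2i-1$. A small bookkeeping step underlying the argument is to check that the iteration of Proposition~\ref{prop:LCD} indeed terminates at $LCD[n,2]$ for every $k\ge 2$. (Alternatively, one could replace the appeal to Proposition~\ref{prop:LCD} by the Plotkin bound: under the hypotheses $2d>n$, which forces $k\le 2$ outright, after which the argument finishes as above.)
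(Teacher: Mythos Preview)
Your proof is correct and follows essentially the same route as the paper's: the $k=1$ case via parity of $d$, and the $k\ge 2$ case by descending to an $[n,2]$ LCD subcode (the paper asserts this subcode's existence without explicitly citing Proposition~\ref{prop:LCD}, while you spell out the iteration), then applying the $LCD[n,2]\le\lfloor 2n/3\rfloor$ bound (the paper invokes Griesmer directly, which is the same thing) together with Proposition~\ref{prop:3} at the boundary $n=6i$. Your added remark about the Plotkin alternative is a nice observation not present in the paper.
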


\begin{proof}
\indent $(i)$ Suppose $C$ is an LCD $[n,k,n-2i]$ code with parameters in the hypothesis. Let $G$ be a generator matrix of $C$.

If $k=1$, then $GG^T=0$ since the minimum distance $n-2i$ is even. Then by Proposition \ref{prop:1}, there is no $[n,1,n-2i]$ LCD code with $n$ even.

Now suppose $k\ge 2$. Then there should exist an LCD $[n,2,n-2i]$ subcode of $C$. By the Griesmer Bound with $k=2$, we obtain $n\ge n-2i+\frac{n-2i}{2}$ which implies $n \le 6i$. Thus  we can say that  there is no $[n,2,n-2i]$ code if $n>6i$. When $n$ meets the Griesmer Bound, i.e., $n=6i$, there is no $[6i,2,4i]$ LCD code because by Proposition \ref{prop:3} the maximum of the possible minimum distance among any $[6i,2]$ LCD codes is $4i-1$.

\indent $(ii)$ A similar argument to $(i)$ shows that there is no $[n,1,n-2i-1]$ LCD code with $n$ odd because the minimum distance $n-2i-1$ is even.

 Suppose $k\ge 2$. Then there should exist an LCD $[n,2,n-2i-1]$ subcode of $C$. By the Griesmer Bound with $k=2$, we have
  $n\ge n-2i-1+\frac{n-2i-1}{2}$ which implies $n \le 6i+3$. Thus  we can say that  there is no $[n,2,n-2i-1]$ code if $n>6i+3$.   That is, there is no such an LCD code.
\end{proof}

In Table 2,  the values of $LCK\left[n,d\right]$ are given for $1 \le d \le n \le 12$ . These values are obtained using Table 1, Proposition~\ref{prop:lck1}, and two tables from \cite{key-1}. The values with $^\ast$ are the ones that are  corrected here as they are  incorrectly  reported in Table 1 of \cite{key-1}.

\section{Appendix}

Below is an exhaustive search program written by MAGMA~\cite{Mag} in order to compute  $LCD[n,k]$ which run slowly for large $n$ and $k$.

\begin{verbatim}
LCD:=function(n,k)
I:=IdentityMatrix(GF(2),k);

Max:=0;
for g in RMatrixSpace(GF(2),k,n-k) do
if Determinant(I+(g*Transpose(g))) eq 1
then if Max lt MinimumDistance
(LinearCode(HorizontalJoin(I,g)))
then Max:=MinimumDistance
(LinearCode(HorizontalJoin(I,g)));
end if;
end if;
end for;
return Max;
end function;
\end{verbatim}




\end{document}